 \theoremstyle{plain}
 \newtheorem{theorem}{Theorem}
 \newtheorem{lemma}[theorem]{Lemma}
 \newtheorem{corollary}[theorem]{Corollary}
 \newtheorem{proposition}[theorem]{Proposition}
 \theoremstyle{definition}
 \theoremstyle{remark}
 \theoremstyle{plain}
 \newtheorem*{theorem*}{Theorem}
 \newtheorem*{lemma*}{Lemma}
 \newtheorem*{corollary*}{Corollary}
 \newtheorem*{proposition*}{Proposition}
 \newtheorem*{claim*}{Claim}
\newlength{\actualtopmargin}
\newlength{\actualsidemargin}
\newcommand{\ket}[1]{| #1 \rangle}
\newcommand{\abs}[1]{\vert #1 \vert}
\newcommand{\ceil}[1]{\lceil #1 \rceil}
\newcommand{\bound}[2]{\in\{ #1,\ldots,#2\}}
\newcommand{\Vertice}{V}
\newcommand{\Edge}{E}
\newcommand{\Int}{\mathbb Z}
\newcommand{\myvec}[1]{\vec #1}
\newcommand{\calB}{\mathcal{B}}
\newcommand{\calH}{\mathcal{H}}
\newcommand{\sfQ}{\mathsf{Q}}
\newcommand{\sfR}{\mathsf{R}}
\newcommand{\sfS}{\mathsf{S}}
\newcommand{\sfT}{\mathsf{T}}
\newcommand{\bbC}{\mathbb{C}}
\newcommand{\bbF}{\mathbb{F}}
\newcommand{\bbQ}{\mathbb{Q}}
\newcommand{\Complex}{\bbC}
\newcommand{\bmzero}{\boldsymbol{0}}
\newcommand{\bmone}{\boldsymbol{1}}
\newcommand{\CNOT}{\mathrm{CNOT}}
\newcommand{\unitaryU}{\mathit{U}}
\newcommand{\unitaryW}{\mathit{W}}
\newcommand{\unitaryY}{\mathit{Y}}
\begin{document}

\sloppy

% ---------------------------------------------------------------------------
%   Title page
% ---------------------------------------------------------------------------

\title{\Large
 \textbf{
   General Scheme for Perfect Quantum Network Coding \\
   with Free Classical Communication
 }\\
}

\author{
 Hirotada Kobayashi\footnotemark[1]~~\footnotemark[2]\\
%%   \texttt{hirotada@nii.ac.jp}
 \and
 Fran\c{c}ois Le Gall\footnotemark[2]\\
%%   \texttt{legall@qci.jst.go.jp}
 \and
 Harumichi Nishimura\footnotemark[3]\\
%%   \texttt{hnishimura@mi.s.osakafu-u.ac.jp}
 \and
 Martin R\"otteler\footnotemark[4]\\
%%   \texttt{mroetteler@nec-labs.com}
}

\date{}

\maketitle
\thispagestyle{empty}
\pagestyle{plain}
\setcounter{page}{0}

\vspace{-5mm}

\renewcommand{\thefootnote}{\fnsymbol{footnote}}

\begin{center}
{\large
 \footnotemark[1]%
 Principles of Informatics Research Division\\
 National Institute of Informatics, Tokyo, Japan\\
%%   2-1-2 Hitotsubashi, Chiyoda-ku, Tokyo 101-8430, Japan\\
 [2.5mm]
 \footnotemark[2]%
 Quantum Computation and Information Project\\
 Solution Oriented Research for Science and Technology\\
 Japan Science and Technology Agency, Tokyo, Japan\\
%%   5-28-3 Hongo, Bunkyo-ku, Tokyo 113-0033, Japan\\
 [2.5mm]
 \footnotemark[3]%
 Department of Mathematics and Information Sciences\\
 Graduate School of Science\\
 Osaka Prefecture University, Sakai, Osaka, Japan\\
%%   1-1 Gakuen-cho, Naka-ku, Sakai, Osaka 599-8531, Japan\\
 [2.5mm]
 \footnotemark[4]%
 NEC Laboratories America, Inc., Princeton, NJ, USA
%%   4 Independence Way, Suite 200, Princeton, NJ 08540, USA
}\\
[5mm]
{\large 11 August 2009}\\
[8mm]
\end{center}

\begin{abstract}
  This paper considers the problem of efficiently transmitting quantum states
  through a network. It has been known for some time that without
  additional assumptions it is impossible to achieve this task {\em
    perfectly} in general --- indeed, it is impossible even for the simple
  butterfly network. As additional resource we allow free classical
  communication between any pair of network nodes. 
%% We show  that
It is shown that 
perfect quantum network coding is achievable in this model whenever classical network coding is possible 
over the same network when replacing all quantum capacities by classical capacities. 
%% More precisely, we demonstrate that perfect quantum network coding 
More precisely, it is proved that perfect quantum network coding 
using free classical communication is possible over a network with $k$ 
source-target pairs if there exists a classical linear (or even vector-linear) coding scheme over a finite ring. 
Our proof is constructive in that we give explicit quantum coding operations for each network node. 
%% We also give an upper bound on the number of classical communication required 
This paper also gives an upper bound on the number of classical communication required 
in terms of $k$, the maximal fan-in of any network node, and the size of the network.
\end{abstract}
\setcounter{footnote}{0}\vspace{-8mm}

\clearpage

%=====================================================================   
\section{Introduction}   
%===================================================================== 
\subsection{Background}
Network coding was introduced by Ahlswede, Cai, Li and Yeung
\cite{ACLY:2000} to send multiple messages efficiently through a
network. Usually, the network itself is given as a weighted, directed
acyclic graph with the weights denoting the capacities of the edges. A
typical example is the butterfly network in
Figure~\ref{fig:classicalbutterfly}. In this example the task is to send
one bit from $s_1$ to $t_1$ and another bit from $s_2$ to $t_2$, where
each edge is a channel of unit capacity.  It is obviously
impossible to send two bits simultaneously by {\em routing} since the
edge between $n_1$ and $n_2$ becomes a bottleneck.  However, using
{\em coding} at the nodes as shown in Figure~\ref{fig:classicalbutterfly},
it is feasible to send the two bits as desired. Two fundamental
observations are in order: First, 
%whereas making perfect copies of (unknown) physical information is impossible, 
copying classical information is possible. In the example of the butterfly network, this is used for
the operations performed at nodes $s_1$, $s_2$, and $n_2$. Second,
and perhaps more importantly, information can be
encoded. In the example, this is used at nodes $n_1$, $t_1$, and $t_2$
where the XOR operation (i.e., addition over the finite field
$\mathbb{F}_2$) is applied.  After the seminal result
\cite{ACLY:2000}, network coding has been widely studied from both
theoretical and experimental points of view and many
applications %such as wireless communication
have been found (we refer to Refs.~\cite{FS:2007,HL:2008} for a good source of information on this topic).
\begin{figure}[t]
\begin{center}
\setlength{\unitlength}{0.25mm}
\begin{picture}(220,230)
%\linethickness{0.5pt}
\thinlines
\put( 20, 220){\makebox(0,0){$x_1$}}
\put(200, 220){\makebox(0,0){$x_2$}}
\put( 20, 210){\vector(0,-1){20}}
\put(200, 210){\vector(0,-1){20}}
%\linethickness{1pt}
\thicklines
\put( 20, 180){\circle{20}}
\put( 20, 180){\makebox(0,0){$s_1$}}
\put( 20, 170){\vector(0,-1){120}}
\put( 10, 110){\makebox(0,0){$x_1$}}
\put( 28.94, 175.53){\vector(2,-1){72.12}}
\put( 65, 165){\makebox(0,0){$x_1$}}
\put(200, 180){\circle{20}}
\put(200, 180){\makebox(0,0){$s_2$}}
\put(200, 170){\vector(0,-1){120}}
\put(210, 110){\makebox(0,0){$x_2$}}
\put(191.06, 175.53){\vector(-2,-1){72.12}}
\put(151, 165){\makebox(0,0){$x_2$}}
\put(110, 135){\circle{20}}
\put(110, 135){\makebox(0,0){$n_1$}}
\put(110, 125){\vector(0,-1){30}}
\put( 85, 112){\makebox(0,0){${x_1 \oplus x_2}$}}
\put(110,  85){\circle{20}}
\put(110,  85){\makebox(0,0){$n_2$}}
\put(101.06,  80.53){\vector(-2,-1){72.12}}
\put( 53,  75){\makebox(0,0){${x_1 \oplus x_2}$}}
\put(118.94,  80.53){\vector(2,-1){72.12}}
\put(163,  75){\makebox(0,0){${x_1 \oplus x_2}$}}
\put( 20,  40){\circle{20}}
\put( 20,  40){\makebox(0,0){$t_2$}}
\put(200,  40){\circle{20}}
\put(200,  40){\makebox(0,0){$t_1$}}
%\linethickness{0.5pt}
\thinlines
\put( 20,  30){\vector(-0,-1){20}}
\put(  20,  0){\makebox(0,0){$x_2$}}
%\put( 27.01,  32.99){\vector( 1,-1){12.99}}
%\put( 40,  10){\makebox(0,0){$x_2$}}
%\put(192.99,  32.99){\vector(-1,-1){12.99}}
%\put(180,  10){\makebox(0,0){$x_1$}}
\put(200,  30){\vector( 0,-1){20}}
\put(200,  0){\makebox(0,0){$x_1$}}
\end{picture}
\caption{
  The butterfly network and a classical linear coding protocol.
  The node $s_1$ (resp.~$s_2$) has for input a bit $x_1$ (resp.~$x_2$).
  The task is to send $x_1$ to $t_1$ and $x_2$ to $t_2$. 
  The capacity of each edge is assumed to be one bit. 
 % Our convention here is that $s_1$ (resp.~$s_2$) receives $x_1$ (resp.~$x_2$)
  %through a virtual incoming edge,
  %and that $t_1$ (resp.~$t_2$) has also a virtual outgoing edge
  %through which it should output $x_1$ (resp.~ $x_2$,).
  \label{fig:classicalbutterfly}
}
\end{center}
\end{figure}
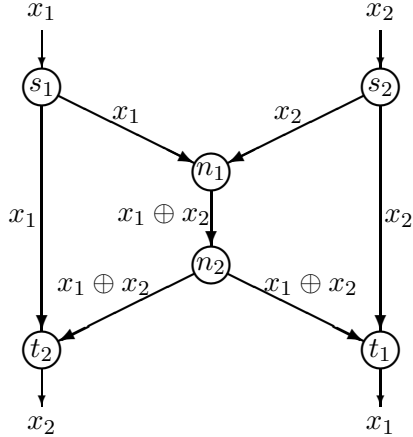

The multicast problem is a task that can be elegantly solved by network coding. 
In this problem, all messages at one source node must be sent to each of several target nodes. 
Ahlswede et al.~\cite{ACLY:2000} showed that the upper bound on the achievable rate given by the min-cut/max-flow condition 
is in fact always achievable. In other words, network coding allows one to send $m$ messages if and only if 
the value of any cut between the source node and each target node is at least $m$. 
Li, Yeung and Cai \cite{LYC:2003} showed that such a rate is always achievable 
by {\em linear} coding over a sufficiently large finite field (in which the operation performed at each node is a linear combination over some finite field).
%e.g., the operations over $\mathbb{F}_2$ performed at all nodes in Fig.~\ref{fig:classicalbutterfly}). 
Furthermore, this result was improved by Jaggi et al.\ \cite{JSC+:2005} who showed that such encoding can be constructed in polynomial time with respect 
to the number of nodes. 
This implies that deciding whether a given multicast network has a (linear) network coding scheme can be solved 
in polynomial time.  This contrasts with the general network coding problem for which 
Lehman and Lehman \cite{LL:2004} showed that it is NP-hard to decide whether there exists a linear coding solution.

Another important subclass of network coding problems is the {\em $k$-pair problem} 
(also called the {\em multiple-unicast problem}). 
In this setting the network has $k$ pairs of source/target nodes $(s_i,t_i)$, 
and each source $s_i$ wants to send a message $x_i$ to the target $t_i$. Notice that Figure~\ref{fig:classicalbutterfly} can be considered 
as a solution of a two-pair problem. It was shown by Dougherty and Zeger \cite{DZ:2006} 
that the solvability (resp.\ linear solvability) of any network coding 
problem can be reduced to the solvability (resp.\ linear solvability) of some instance of the $k$-pair problem.      
Combined with the Lehman-Lehman result, this implies that the linear solvability of the $k$-pair problem is NP-hard. 
Polynomial-time constructions of linear coding for fixed $k$ have been investigated \cite{INPRY:2008,WS:2007,WS:2007b}, 
but no complete answer has been obtained yet. 
 
Recently, network coding has become a topic of research in quantum
computation and information, giving rise to a theory of quantum
network coding.  The most basic setting is the following. The messages
are quantum states, and the network is also quantum, i.e., each edge
corresponds to a quantum channel.  The question is whether quantum
messages can be sent efficiently to the target nodes through the
network (possibly using the idea of network coding).  A very basic
difficulty immediately arises as opposed to the classical case:
quantum information cannot be copied \cite{NC:2000}. Hence,
multicasting quantum messages is impossible without imposing any extra
conditions. One approach to work around this problem has been
developed by Shi and Soljanin \cite{SS:2006}, who constructed a
perfect multicasting scheme over families of quantum networks under
the condition that the source owns many copies of quantum states.  A
more natural target may be the $k$-pair problem since here the number of inputs
matches the number of outputs. For this problem, however, there are
already a number of negative results known for the above basic
setting. First, Hayashi~et~al.~\cite{HIN+:2007} showed that sending
two qubits simultaneously and {\em perfectly (i.e., with fidelity
  one)} on the butterfly network is impossible. Leung, Oppenheim and
Winter \cite{LOW:2006} extended this impossibility result to the case
where the messages have to be sent in an asymptotically perfect way,
and also to classes of networks other than the butterfly network.
This means that some extra condition is needed to achieve perfect
quantum network coding for the $k$-pair problem case as well.

%\textbf{Main results.} 
\subsection{Main results} 
The extra condition considered in this paper is to allow \emph{free classical communication} 
to assist with sending quantum messages perfectly through the network. That is, any two nodes can 
communicate with each other through a classical channel which can be used freely (i.e., at no cost). Free 
classical communication as an extra resource often appears in quantum information theory, e.g.,~entanglement 
distillation and dilution (see Ref.~\cite{NC:2000}). Also, from a practical viewpoint, 
quantum communication is a very limited resource while classical communication is much easier 
to implement. Thus it would be desirable if the amount of quantum communication could be reduced 
using network coding with the help of classical communication. Another 
extra resource that may be considered (and rather popular in quantum information processing) is \emph{entanglement}, 
such as shared EPR pairs. However, it has the weakness that, once used, quantum communication is needed to recreate it. 
Therefore, allowing free classical communication arguably is a comparatively mild additional resource for perfect quantum network coding.

The first result of this paper (Theorem~\ref{theorem1}) can be summarized as follows: if there exists a classical linear coding scheme
over a ring $R$ for a $k$-pair problem given by a graph $G=(V,E)$, then there exists a solution
to the quantum $k$-pair problem over the same graph $G$ if free classical communication is allowed.
The idea to obtain this result is to perform a node-by-node simulation of the coding scheme 
solving the classical problem. For example, suppose that, in the classical coding scheme, a node $v$ of $G$ performs the map
$(z_1,z_2)\mapsto f(z_1,z_2)$ where $f \colon R^2\to R$ is some function. In the quantum case, this node
will perform the  quantum map $\ket{z_1,z_2}\ket{0}\mapsto \ket{z_1,z_2}\ket{f(z_1,z_2)}$.
A basic observation is that the first two registers should be ideally ``removed'' in order to simulate properly the classical
scheme. This task does not seem straightforward since the quantum state is in general a superposition of basis states, and this superposition has to be preserved
so that the input state can be recovered at target nodes.
Our key technique shows that this can be done if free classical communication is allowed, and if 
the classical scheme  to be simulated is linear.
More precisely, these registers are ``removed'' by measuring them in the Fourier basis associated with the additive group of~$R$.
An extra phase then appears, but it can be corrected \emph{locally} at each target node as will be proved in Section~\ref{section:main}.
This requires (free) classical communication. In our construction it is sufficient to send  at most $kM\abs{V}$ elements of $R$, 
where $M$ is the maximal fan-in of nodes of $G$. 
%and this superposition has to be preserved so that the input state can be recovered 
%by local operations at target nodes.
%Our key-technique (a basis change by the quantum Fourier transform over 
%the additive group of $R$)  
%shows that this can be done if free classical communication is allowed, and if 
%the classical scheme  to be simulated is linear.

A  classical coding scheme is called {\em vector-linear}
if the operation at each node is of the form $\sum_i A_i \myvec{v_i}$,
where the $\myvec{v_i}$'s are the vectors input to the node and $A_i$ is a matrix
to apply to $\myvec{v_i}$.
The result above can be extended to the vector-linear coding case as well (Theorem \ref{theorem2}).
That is, 
if there exists a classical vector-linear coding scheme over a ring $R$ for a $k$-pair problem given by a graph
$G$, then there also exists a solution to the quantum $k$-pair problem over the same graph $G$ (again if free classical communication is allowed).
Notice that there are examples of graphs over which a vector-linear solution is known but no
linear coding scheme exists (see Refs.~\cite{MEHK:2003,LL:2004}).
There are also examples
for which even vector-linear coding is not sufficient \cite{DFZ:2005}.
However, most of known networks solvable by network coding
have vector-linear solutions, and hence our result is applicable quite widely (and actually is even applicable to the examples in Ref.~\cite{DFZ:2005}).

%=== Removed paragraph 23 April 2009
%=== Restored 5 August 2009
Recently, the present authors studied a slightly generalized version of the $k$-pair problem 
where the pairs $(s_i,t_i)$ can be chosen at the end of the protocol \cite{KLNR:2009}. 
The strategy there worked only when there is a solution to the associated classical
multicast problem, in which each source node $s_i$ has to send its input to all target nodes $t_j$.
The idea was to simulate a classical multicast coding scheme over a finite field  
in order to create shared cat states (and then to convert them into EPR pairs 
so that quantum teleportations can be performed). 
This paper gives a more direct way of realizing quantum state transmission 
(without teleportations) that is applicable whenever there is a linear solution 
to the classical $k$-pair problem, which is a much weaker condition.
Furthermore, we extend a core idea of the protocol of Ref.~\cite{KLNR:2009}, 
namely, the quantum simulation of a classical linear coding scheme over a finite field, 
to a finite ring and to the vector-linear case.

%\textbf{Related work.}
\subsection{Related work}
There are several papers studying quantum network coding 
on the $k$-pair problem in situations different from the most basic setting 
(perfect transmission of quantum states using only a quantum network of limited capacity). 
Hayashi et al.\ \cite{HIN+:2007} and Iwama et al.\ \cite{INRY:2006} 
considered ``approximate'' transmission of qubits in the $k$-pair problem, 
and showed that transmission with fidelity larger than $1/2$ is possible for a class 
of networks. Hayashi~\cite{H:2007} showed how to achieve perfect transmission of two qubits 
on the butterfly network if two source nodes have prior entanglement,
and if, at each edge, we can choose between sending two classical bits and sending one qubit. 
Leung, Oppenheim and Winter~\cite{LOW:2006} considered various extra resources 
such as free forward/backward/two-way classical communication and entanglement, 
and investigated the lower/upper bounds of the rate of quantum network coding 
for their settings. The setting of the present paper is close to their model allowing free two-way classical communication. 
The difference is that Ref.~\cite{LOW:2006} considered asymptotically 
perfect transmission while this paper focuses on perfect transmission. 
Also, Ref.~\cite{LOW:2006} showed optimal rates for a few classes of networks 
while the present paper gives lower bounds for much wider classes of networks.
% (which may not always be optimal).
%Recently, the present authors \cite{KLNR:2009} gave a perfect quantum teleportation protocol in a setting that is close to
%the multicast model. The strategy was to first share cat states, and then to create EPR pairs so that we can perform quantum teleportations.
%This approach cannot work for the $k$-pair problem because in general such cat states cannot be created in this setting.
%This paper gives a more direct way of realizing quantum state transmission without teleportation that is applicable even to the $k$-pair problem.
%We also extend a core idea of the protocol of Ref.~\cite{KLNR:2009}, the quantum simulation of the linear coding over a finite field, 
%to a finite ring and the vector-linear case.

As mentioned in Ref.~\cite{LOW:2006}, free classical communication essentially makes 
the underlying directed graph of the quantum network undirected since quantum teleportation enables one
to send a quantum message to the reverse direction of a directed edge. In this context, our result 
gives a lower bound of the rate of quantum network coding that might not be optimal 
even if its corresponding classical coding is optimal in the directed graph.  
However, even in the classical case, network coding over an undirected graph 
is much less known than that over a  directed one. In the multicast network, the gap between 
the rates by network coding and by routing is known to be at most two~\cite{LLL:2008}, while there is an example for which
the min-cut rate bound cannot be achieved by network coding \cite{LLL:2008,ABY:2008}.
Also notice that, in the $k$-pair problem, it is conjectured that fractional routing achieves the optimal rate 
for any undirected graph (see for example Ref.~\cite{HKL:2006}).
However, this conjecture has been proved only for very few families of networks, and remains one of the main open problems in
the field of network coding.

%=====================================================================
\section{\boldmath{The $k$-pair problem}}\label{section:statement}
%=====================================================================
\subsubsection*{\boldmath{The classical $k$-pair problem.}}
%\noindent
We recall the statement of the $k$-pair problem in the classical case, and the definition of a solution to this problem.
The reader is referred to, for example, Ref.~\cite{DZ:2006} for further details. 

An instance of a $k$-pair problem is a directed graph ${G=(\Vertice,\Edge)}$ and $k$ pairs 
of nodes $(s_1,t_1),\ldots,(s_k,t_k)$. %The goal is to send $k$ messages 
%$x_1,\ldots,x_k$ from the source nodes to the target nodes.
For $i\bound{1}{k}$, $x_i$ is given at the source~$s_i$, and has to  
be sent to the target~$t_i$ through $G$ under the condition 
that each edge has unit capacity.
Let $\Sigma$ be a finite set.
%can send one element of $\Sigma$ (i.e., each edge has 
%unit capacity).
% where $\Sigma$ is an appropriately chosen finite set. 
A coding scheme over $\Sigma$ is a choice of operations for all nodes in $\Vertice$: 
for each node $v\in\Vertice$ with fan-in~$m$ and fan-out~$n$, the operation at $v$ 
is written as $n$ functions $f_{v,1},\ldots,f_{v,n}$, each from $\Sigma^m$ 
to $\Sigma$, where the value $f_{v,i}(z_1,\ldots,z_m)$ represents the message 
sent through the $i$-th outgoing edge of $v$ when the inputs of the $m$ incoming edges are
$z_1,\ldots,z_m$.  
A {\em solution} over $\Sigma$ to an instance of the $k$-pair problem is a coding scheme over $\Sigma$ 
that enables one to send simultaneously $k$ messages $x_i$ from $s_i$ to $t_i$, for all~$i\bound{1}{k}$.
For  example, the coding scheme in Figure~\ref{fig:classicalbutterfly} is a solution over $\mathbb{F}_2$ to the two-pair problem associated with the butterfly graph.

%The key result of this paper is
%a quantum simulation of any classical linear network coding scheme over a ring.
%% Here we use the standard definition of classical linear network coding (see~\cite{LYC:2003,SET:2003}).
For convenience, the following simple but very useful convention is assumed when describing a classical coding scheme. 
Each source ${s_i}$ is supposed to have a ``virtual'' incoming edge
from which it receives its input $x_i$.
Also, each target ${t_i}$ is supposed to have a ``virtual'' outgoing edge,
where $x_i$ must be output through.
In this way, the source and target nodes perform 
coding operations on their inputs,
and this convention enables one to ignore the distinction between
source/target nodes and internal nodes.
These conventions are illustrated in Figure~\ref{fig:classicalbutterfly}.

%===============================================
\subsubsection*{\boldmath{The quantum $k$-pair problem.}}
%===============================================
%\noindent 
We suppose that the reader is familiar with the basics of quantum information theory and 
refer to Ref.~\cite{NC:2000} for a good reference.
In this paper we will consider $d$-dimensional quantum systems, i.e.,~quantum 
states with associated complex Hilbert space
%~${\Complex^d}$
of dimension~$d$, for some positive integer $d$. 
%The simplest case of ${\calH = \Complex^2}$ is of particular importance, and a system supporting such a state space is called a \emph{qubit (quantum bit)}.
%For a $d$-dimensional system, we label the orthonormal basis states
%by the elements of some alphabet of size $d$, e.\,g., the numbers $\{0,1,\ldots,d-1\}$.
%A normalized vector in $\Complex^d$ is called a \emph{qudit},
%is written as ${\ket{\psi} = \sum_{i=0}^{d-1} \alpha_i \ket{i}}$,
%where ${\alpha_i \in \Complex}$ and ${\sum_{i=0}^{d-1} \abs{\alpha_i}^2 = 1}$.
%Multi-registers quantum states consist of several qudits.
%The basis states of a quantum state of $n$ qudits are tensor products of the basis states of the single qudits.
%A general state of a quantum register of $n$ qudits
%is a normalized vector in ${\calH = (\Complex^d)^{\otimes n} \cong \Complex^{d^n}}$.

An instance of a quantum $k$-pair problem is, as in the classical case, a directed graph $G=(\Vertice,\Edge)$ and $k$ pairs 
of nodes $(s_1,t_1),\ldots,(s_k,t_k)$. 
Let $\calH$ be a Hilbert space.
The goal is to send a quantum state $\ket{\psi}\in\calH^{\otimes k}$ supported on the source nodes $s_1,\ldots,s_k$ (in this order) to the target 
nodes $t_1,\ldots,t_k$ (in this order). We consider the model where each edge of  $G$ can transmit one quantum state over~$\calH$. 
However, free classical communication is allowed between any two nodes of~$G$. 
For a positive integer~$d$, an instance of the quantum $k$-pair
 problem is said \emph{solvable} 
%Let $d$ be a positive integer.
%We say that an instance of the quantum $k$-pair problem is solvable 
over $\Complex^d$ if there exists a protocol solving this problem for~$\calH=\Complex^d$.

%========================================================================
\section{Perfect quantum network coding}\label{section:main}
%========================================================================

%==========================
\subsection{Linear coding over rings}\label{subsection:ring}
%==========================

This subsection considers instances of the $k$-pair problem for which there exists a solution 
using classical linear coding over rings, i.e.,~$\Sigma$ is supposed to be a finite ring $R$ 
(not necessarily commutative). A coding scheme is said \emph{linear} over $R$ if the functions 
$f_{v,i}$ corresponding to the encoding operations performed at each node $v\in\Vertice$ are linear.
%Let us denote $\abs{R}=r$. 
%Recall that a classical linear coding scheme over $R$ consists of 
%linear codings $f_{v,i}$ at all nodes $v$ in a graph $G=(\Vertice,\Edge)$ given 
%as an instance of the $k$-pair problem.  

%A classical linear protocol over $R$ is a  protocol such that (1) at the end of the protocol, 
%each sink $t_i$ obtains $x_i$; and (2) each node performs a linear encoding as described in the following definition.
%\begin{definition}[{\bf Linear encoding at a node}]\label{definition1}
%Let $V$ be a node of $G$. Suppose that this node has  fan-in $m$ and fan-out $n$.
%Let $y_i$ for $i\bound{1}{m}$ denote the entries of the node.  
%A coding scheme for this node is a set of functions $f_i(y_1,\ldots,y_m)$ for $i\bound{1}{n}$, 
%each $f_i$ being associated to the $i$-th output of the node.  
%A linear coding scheme is a scheme such that the functions $f_i$ are linear functions of their inputs.
%\end{definition}

The main result of this subsection is the following theorem. %If $R$ is a ring, let $\abs{R}$ denote its size.
\begin{theorem}\label{theorem1}
Let $G=(\Vertice,\Edge)$ be a directed graph and $(s_1,t_1),\ldots,(s_k,t_k)$ be $k$ pairs of nodes.
Let $M$ be the maximal fan-in of nodes in $G$ and $R$ be a finite ring.
Suppose that there exists a linear solution over $R$ to the associated classical $k$-pair problem.
Then the corresponding quantum $k$-pair problem is solvable over $\Complex^{|R|}$. 
Moreover, there exists a quantum protocol for this task that sends at most $kM\abs{V}$ elements of $R$ as free classical communication, 
i.e., at most $kM\abs{V}\ceil{\log_2 \abs{R}}$ bits of classical communication.
\end{theorem}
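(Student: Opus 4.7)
The plan is to build an explicit quantum protocol that simulates the given classical linear coding scheme node by node, using Fourier-basis measurements associated with the additive group of $R$ to ``erase'' consumed intermediate registers while preserving quantum coherence up to a phase that can be corrected locally at each target.

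Fix a topological order on $\Vertice$; the graph $G$ must be a DAG since otherwise no classical linear solution would exist. At each node $v$ with fan-in $m$ and fan-out $n$, the classical scheme specifies linear maps $f_{v,j}(z_{1},\ldots,z_{m})=\sum_{i}a_{v,j,i}z_{i}$ with coefficients $a_{v,j,i}\in R$. The quantum operation at $v$ proceeds in three steps: (i) allocate $n$ fresh ancillas $\ket{0}\in\Comp^{\abs{R}}$ and apply the computational-basis unitary $U_{v}\colon\ket{z_{1},\ldots,z_{m}}\ket{0,\ldots,0}\mapsto\ket{z_{1},\ldots,z_{m}}\ket{f_{v,1},\ldots,f_{v,n}}$, which is well defined because it extends the standard reversible permutation $(\vec{z},\vec{w})\mapsto(\vec{z},\vec{w}+\vec{f}(\vec{z}))$ of $R^{m+n}$; (ii) forward the $n$ new registers along the outgoing edges; (iii) measure each of the $m$ input registers in the Fourier basis of the abelian group $(R,+)$, obtaining outcomes $r_{1},\ldots,r_{m}\in R$.

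The heart of the proof is to track the phases introduced by these measurements. I would prove by induction on the topological position that after steps (i)--(iii) have been executed at every node processed so far, the joint state across the currently live edge registers takes the form
\[
\sum_{\vec{x}}\alpha_{\vec{x}}\,\Big(\prod_{j=1}^{k}\psi_{j}(x_{j})\Big)\bigotimes_{e}\ket{m_{e}(\vec{x})},
\]
where $m_{e}(\vec{x})=\sum_{i}c_{e,i}x_{i}$ is the classical edge message (which is linear in $\vec{x}$ by linearity of the classical scheme) and each $\psi_{j}$ is a character of the abelian group $(R,+)$ determined by the measurement outcomes recorded so far together with the publicly known linear coefficients of the classical scheme. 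The inductive step follows from the computation that a Fourier measurement with outcome $r$ on a register in state $\ket{m_{e}(\vec{x})}$ multiplies the remaining state by a character value $\chi_{r}(m_{e}(\vec{x}))=\prod_{i}\chi_{r}(c_{e,i}x_{i})$, and each factor $x_{i}\mapsto\chi_{r}(c_{e,i}x_{i})$ is itself a character of $(R,+)$, which is absorbed into $\psi_{i}$. When the induction terminates, by correctness of the classical scheme the only live registers are the $k$ virtual output registers, the $i$-th of which sits at $t_{i}$ and holds $\ket{x_{i}}$, so the final state is $\sum_{\vec{x}}\alpha_{\vec{x}}\prod_{j}\psi_{j}(x_{j})\ket{x_{1},\ldots,x_{k}}$.

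Because the residual phase factors as a product of per-target characters, each target $t_{j}$ can remove its own factor by applying the local diagonal unitary $\ket{x_{j}}\mapsto\overline{\psi_{j}(x_{j})}\ket{x_{j}}$, thereby recovering the desired state $\ket{\psi}$. Target $t_{j}$ needs only enough classical information to reconstruct $\psi_{j}$, and since the coefficients $c_{e,i}$ are fixed public data of the protocol, it suffices to forward all measurement outcomes. With at most $\abs{\Vertice}$ nodes each performing at most $M$ Fourier measurements producing a single element of $R$, and each outcome being relayed to the $k$ targets, the total classical communication is at most $kM\abs{\Vertice}$ elements of $R$. The delicate step I expect to require the most care is verifying that the accumulated phase genuinely factors across source indices as $\prod_{j}\psi_{j}(x_{j})$; this is exactly where classical linearity is indispensable, because nonlinear internal encodings would couple the $x_{j}$'s inside the character argument and no single target could undo the resulting entanglement between inputs and global phase.
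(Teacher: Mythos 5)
Your proposal is correct and follows essentially the same route as the paper: node-by-node simulation of the classical linear scheme, computing the linear functions into fresh ancillas, erasing the consumed registers by measuring in the Fourier basis of $(R,+)$, observing that the resulting phase is a character that factors across the source indices by linearity, and correcting it locally at each target using the broadcast measurement outcomes, with the same $kM\abs{V}$ count. The only difference is presentational: you phrase the phase bookkeeping abstractly via characters and an inductive invariant on the full superposition, while the paper proves a basis-state lemma with the characters written explicitly through an isomorphism of $(R,+)$ with $\Int_{r_1}\times\cdots\times\Int_{r_\ell}$ and then extends by linearity.
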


The basic strategy for proving Theorem \ref{theorem1} is 
to perform a quantum simulation of the classical coding scheme.
%The first key idea of the simulation is to keep quantum superposition while the phase errors occur in the superposition.
%The second key idea is to correct the errors by local operations of target nodes with the help of free classical communication.
%This strategy 
This strategy is illustrated on the simple case $R=\bbF_2$ in Appendix A.
Before presenting the proof of this theorem, we need some preliminaries.

Let $\phi$ be a group isomorphism from the additive group of $R$ to some abelian group 
$A=\Int_{r_1}\times\cdots\times\Int_{r_\ell}$ with $\mathrm{\Pi}_{i=1}^\ell r_i=\abs{R}$ 
(but $\phi$ is not necessarily a ring isomorphism). There are many possibilities 
for the choice of $A$ and $\phi$. One convenient choice is to take $\Int_{r_1}\times\cdots\times\Int_{r_\ell}$ 
to be the invariant factor decomposition of the additive group of $R$. 
For any $x\in R$ and $i\bound{1}{\ell}$, let $\phi_i(x)$ denote the 
$i$-th coordinate of $\phi(x)$, i.e.,~an element of $\Int_{r_i}$. 
In the quantum setting, we suppose that each register contains a quantum state 
over $\calH=\Complex^{|R|}$, and denote by $\{\ket{z}\}_{z\in R}$ an orthonormal basis of $\calH$.
We define a unitary operator $\unitaryW$ over the Hilbert space $\calH$ as follows: 
for any $y\in R$, the operator $\unitaryW$ maps the basis state $\ket{y}$ to the state
%$
%\frac{1}{\sqrt {|R|}}\sum_{z\in R}\exp\bigl(2\pi \iota \sum_{i=1}^\ell\frac{\phi_i(y)\cdot\phi_i(z)}{r_i}\bigr)\ket{z}.
%$
$$
\frac{1}{\sqrt {|R|}}\sum_{z\in R}\exp\Bigl(2\pi \iota \sum_{i=1}^\ell\frac{\phi_i(y)\cdot\phi_i(z)}{r_i}\Bigr)\ket{z}.
$$
Here $\phi_i(y)\cdot\phi_i(z)$ denotes the product of $\phi_i(y)$ and $\phi_i(z)$, 
seen in the natural way as an element of the set $\{0,\ldots,r_i-1\}$. 
Note that $\unitaryW$ is basically the quantum Fourier transform over the additive group of $R$. 

Let $m$ and $n$ be two positive integers and $f_1,\ldots,f_n$ be $n$ functions from $R^m$ to $R$.
Let $\unitaryU_{f_1,\ldots,f_n}$ be the unitary operator over the Hilbert space ${\calH^{\otimes m}\otimes\calH^{\otimes n}}$ 
defined as follows:
%using the notations of Definition \ref{definition1} to denote the coding
%operation performed at node $V$, 
for any $m$ elements $y_1,\ldots,y_m$ and any $n$ elements $z_1,\ldots,z_n$ of $R$, 
the operator $\unitaryU_{f_1,\ldots,f_n}$ maps the basis state $\ket{y_1,\ldots,y_m}\ket{z_1,\ldots,z_n}$ to the state
$$\ket{y_1,\ldots,y_m}\ket{z_1+f_{1}(y_1,\ldots,y_m),\ldots,z_n+f_{n}(y_1,\ldots,y_m)}.$$ 
Now let us define the following quantum procedure $\proc{Encoding}(f_1,\ldots,f_n)$.

\begin{codebox}
\Procname{Procedure $\proc{Encoding}(f_1,\ldots,f_n)$}%\emph{[The notation of Definition \ref{definition1} are used]} }
\zi \const{input:} quantum registers $\sfQ_1,\ldots,\sfQ_m$,
% each containing an element of  $\calH$
each associated with $\calH$
\zi \const{output:} quantum registers $\sfQ'_1,\ldots,\sfQ'_n$, 
%each containing an element of  $\calH$, 
each each associated with $\calH$, and elements $a_1,\ldots,a_m$ of $R$ 
\li Introduce $n$ registers $\sfQ'_1,\ldots,\sfQ'_n$, each initialized to $\ket{0_{\calH}}$.
\li Apply the operator $\unitaryU_{f_1,\ldots,f_n}$ to 
%the registers 
$(\sfQ_1,\ldots,\sfQ_m,\sfQ'_1,\ldots,\sfQ'_n)$.
\li For each $i\bound{1}{m}$, apply $\unitaryW$ to 
%register 
$\sfQ_i$. 
\li Measure the first $m$ registers $\sfQ_1,\ldots,\sfQ_m$ in the computational basis.
\zi Let $a_1,\ldots,a_m\in R$ denote the outcomes of the measurements.
\li Output $\sfQ'_1,\ldots,\sfQ'_n$ and the $m$ elements $a_1,\ldots,a_m$.
\End
\end{codebox}\vspace{0mm}

The behavior of this procedure on a basis state is described in the next proposition.
\begin{proposition}\label{prop:nodesimulation}
%Suppose that $z_i$ is a linear function of $x_1,\ldots,x_k$.
Suppose that the contents of the registers $\sfQ_1,\ldots,\sfQ_m$ 
forms the state $\ket{y_1,\ldots,y_m}_{(\sfQ_1,\ldots,\sfQ_m)}$ for some elements $y_1,\ldots,y_m$ of $R$.
Then the state in $(\sfQ'_1,\ldots,\sfQ'_n)$ after applying Procedure $\proc{Encoding}(f_1,\ldots,f_n)$ 
is of the form
\[
\exp\left(2\pi \iota g(y_1,\ldots,y_m)\right)
\ket{f_{1}(y_1,\ldots,y_m),\ldots,f_{n}(y_1,\ldots,y_m)}_{(\sfQ'_1,\ldots,\sfQ'_n)},
\]
where $g:R^m\to \bbQ$ is an additive group homomorphism determined 
by the measurement outcomes $a_1,\ldots,a_m$.
\end{proposition}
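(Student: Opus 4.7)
The plan is to propagate the computational-basis input $\ket{y_1,\ldots,y_m}_{(\sfQ_1,\ldots,\sfQ_m)}$ through the four lines of $\proc{Encoding}$ by direct computation, read off the overall phase left on $(\sfQ'_1,\ldots,\sfQ'_n)$ after the final measurement, and then verify that this phase is additive in $(y_1,\ldots,y_m)$. After line~1 the ancillas $\sfQ'_1,\ldots,\sfQ'_n$ are each initialized to $\ket{0}$, and by the definition of $\unitaryU_{f_1,\ldots,f_n}$, line~2 transforms the joint state into $\ket{y_1,\ldots,y_m}_{(\sfQ_1,\ldots,\sfQ_m)}\ket{f_1(y_1,\ldots,y_m),\ldots,f_n(y_1,\ldots,y_m)}_{(\sfQ'_1,\ldots,\sfQ'_n)}$. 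At this point the output registers already hold the intended basis state, so the only remaining task is to dispose of the now-unwanted dependence of the input registers on $(y_1,\ldots,y_m)$.

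Next I would apply $\unitaryW$ to each $\sfQ_i$ as in line~3. Using the definition of $\unitaryW$ and expanding the tensor product, the input-register component of the joint state becomes
\[
\frac{1}{\abs{R}^{m/2}}\sum_{z_1,\ldots,z_m\in R}\exp\!\left(2\pi\iota\sum_{i=1}^m\sum_{j=1}^\ell\frac{\phi_i(y_i)\,\phi_j(z_i)}{r_j}\right)\ket{z_1,\ldots,z_m}_{(\sfQ_1,\ldots,\sfQ_m)},
\]
tensored with the unchanged output registers. The computational-basis measurement in line~4 then collapses $(z_1,\ldots,z_m)$ to the recorded outcome $(a_1,\ldots,a_m)\in R^m$, leaving $(\sfQ'_1,\ldots,\sfQ'_n)$ in the state
\[
\exp\!\left(2\pi\iota\sum_{i=1}^m\sum_{j=1}^\ell\frac{\phi_j(y_i)\,\phi_j(a_i)}{r_j}\right)\ket{f_1(y_1,\ldots,y_m),\ldots,f_n(y_1,\ldots,y_m)}_{(\sfQ'_1,\ldots,\sfQ'_n)}.
\]
This reads off $g(y_1,\ldots,y_m)=\sum_{i,j}\phi_j(y_i)\phi_j(a_i)/r_j$, with the outcomes $a_1,\ldots,a_m$ as the only free parameters.

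It then remains to verify that $g$ behaves as an additive group homomorphism on $R^m$. Each $\phi_j$ is a homomorphism from the additive group of $R$ to $\bbZ_{r_j}$, but in the formula for $g$ its values are taken as integer representatives in $\{0,\ldots,r_j-1\}$, so $\phi_j(y_i+y_i')=\phi_j(y_i)+\phi_j(y_i')-c_{i,j}r_j$ for some $c_{i,j}\in\{0,1\}$ depending on $y_i,y_i'$. Substituting this into the expression for $g$ yields $g(y+y')=g(y)+g(y')-\sum_{i,j}c_{i,j}\phi_j(a_i)$, whose final sum is an integer. Hence $g(y+y')\equiv g(y)+g(y')\pmod{\bbZ}$, which is all that is needed since $g$ enters the proposition only through $\exp(2\pi\iota\,g)$. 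I expect this additivity check to be the only real obstacle: the formula for $g$ is written in $\bbZ$ rather than modulo $r_j$, so it is not literally linear, and one has to observe that the carry terms contribute only integer phases that vanish inside the exponential.
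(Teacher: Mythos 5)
Your computation tracks the paper's own proof step for step: apply $\unitaryU_{f_1,\ldots,f_n}$, Fourier-transform the input registers with $\unitaryW$, collapse on the measurement outcome, and read off the phase $g(y_1,\ldots,y_m)=\sum_{i,j}\phi_j(a_i)\,\phi_j(y_i)/r_j$, exactly as in the paper (your first display has a harmless index typo, $\phi_i(y_i)$ for $\phi_j(y_i)$). Your closing remark that the integer representatives make $g$ additive only modulo $\bbZ$ --- which is all that matters inside $\exp(2\pi\iota\, g)$ --- is in fact a more careful reading than the paper's bare assertion that $g\colon R^m\to\bbQ$ is a group homomorphism, so the proposal is correct and essentially identical in approach.
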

\begin{proof}
%Since each input $z_i$ of $V$ is a linear function of the $x_j$, i.e.~for each $i\bound{1}{s}$, we can write
%$$z_i=\sum_{j=1}^k\alpha_{i,j}x_j,$$ where the $\alpha_{i,j}$'s are elements of $R$ (possibly zero).  
%After Step 1 the state is $\ket{z_1,\ldots,z_s}_{\sfR_1,\ldots,\sfR_s}\ket{f_1(z_1,\ldots,z_s),\ldots, f_t(z_1,\cdots,z_s)}_{\sfR'_1,\ldots,\sfR'_t},$.
After Step~3, the resulting state is 
\[
\begin{split}
\frac{1}{\sqrt{|R|^m}}
&
\sum_{z_1,\ldots,z_m\in R}
\exp\Bigl(2\pi \iota\sum_{i=1}^\ell\sum_{j=1}^m\frac{ \phi_i(y_j)\cdot\phi_i(z_j)}{r_i}\Bigr)
\\
&
\times \ket{z_1,\ldots,z_m}_{(\sfQ_1,\ldots,\sfQ_m)}\ket{f_{1}(y_1,\ldots,y_m),\ldots,f_{n}(y_1,\ldots,y_m)}_{(\sfQ'_1,\ldots,\sfQ'_n)}.
\end{split}
\]

At Step~4, if the measurement outcomes are $a_1,\ldots,a_m$, 
where each $a_i$ is an element of $R$, then the state in
%registers
$(\sfQ'_1,\ldots,\sfQ'_n)$ becomes
$$
\exp\Bigl(2\pi \iota\sum_{i=1}^{\ell}\sum_{j=1}^m\frac{ \phi_i(a_j)\cdot\phi_i(y_j)}{r_i}\Bigr)\ket{f_{1}(y_1,\ldots,y_m),\ldots,f_{n}(y_1,\ldots,y_m)}_{(\sfQ'_1,\ldots,\sfQ'_n)}.$$
%Using the linearity of each $\phi^{(i)}$ 
This can be rewritten as 
$$
\exp{\left(2\pi \iota g(y_1,\ldots,y_m)\right)}\ket{f_{1}(y_1,\ldots,y_m),\ldots,f_{n}(y_1,\ldots,y_m)}_{(\sfQ'_1,\ldots,\sfQ'_n)},$$
where 
%$$g_V(z_1,\ldots,z_s)=\sum_{i=1}^{\ell}\sum_{j=1}^s\sum_{r=1}^k\frac{ u_j^{(i)}\phi^{(i)}(\alpha_{j,r}x_r)}{m_i}.$$
$g(y_1,\ldots,y_m)=\sum_{i=1}^{\ell}\sum_{j=1}^m\frac{ \phi_i(a_j)\cdot\phi_i(y_j)}{r_i}.$
Notice that $g$ is an additive group homomorphism determined by the values of $a_1,\ldots,a_m$. 
\end{proof}

Now we are ready to give the proof of Theorem \ref{theorem1}.

\begin{proof}[Proof of Theorem \ref{theorem1}]
Let $G=(\Vertice,\Edge)$ be a graph on which there exists a linear solution over $R$
to the classical $k$-pair problem associated with the pairs $(s_i,t_i)$. For each node $v\in\Vertice$ 
with fan-in $m$ and fan-out $n$, let $f_{v,1},\ldots,f_{v,n}$ be the coding operations performed at node $v$
in such a solution, where each $f_{v,i}$ is from $R^m$ to $R$.
Suppose that the input state of the quantum task is
$$
\ket{\psi_S}_{(\sfS_1,\ldots,\sfS_k)}=\sum_{x_1,\ldots,x_k\in R}\alpha_{x_1,\ldots,x_k}\ket{x_1}_{\sfS_1}\otimes\cdots\otimes\ket{x_k}_{\sfS_k},
$$
where the $\alpha_{x_1,\ldots,x_k}$'s are complex numbers such that $\sum_{x_1,\ldots,x_k\in R}\abs{\alpha_{x_1,\ldots,x_k}}^2=1$.
Here, for each $i\bound{1}{k}$, $\sfS_i$ is a register owned by the node $s_i$.
% and the $\alpha_{x_1,\ldots,x_k}$ are complex$\sum_{x_1,\ldots,x_k\in R}\abs{\alpha_{x_1,\ldots,x_k}}^2=1$.

The strategy is to simulate the solution to the associated classical task node by node.
We shall show that the classical coding operation performed at a node with fan-in $m$ 
can be simulated by sending $km$ elements of $R$ using free classical communication. 
The general bound $kM\abs{V}$ claimed in the statement of the theorem then follows.

More precisely, let $v\in\Vertice$ be a node of $G$ with fan-in $m$ and fan-out $n$. 
The coding performed at node $v$ is simulated as follows: 
The quantum procedure $\proc{Encoding}(f_{v,1},\ldots,f_{v,n})$ is used on the $m$ quantum registers 
input to $v$ through its $m$ incoming edges. The procedure outputs $n$ registers and $m$ elements $a_1,\ldots,a_m$ of $R$. 
Then all the elements $a_1,\ldots,a_m$ are sent to each target node (via free classical communication), 
and the $n$ registers are sent along the $n$ outgoing edges of $v$. 
%This simulation of the classical coding operation performed at node $V$ sends $sk$ elements of $A$ (that can be seen as element of $R$ through the map $\phi^{-1}$). 
%Denote by $\calB$ this procedure and by $\calB(\ket{\psi_S}_{\sfS_1,\ldots,\sfS_k})$ the final state. 
%Our goal is to show that 
%$$\calB(\ket{\psi_S}_{\sfS_1,\ldots,\sfS_k})=\sum_{x_1,\ldots,x_k\in R}\alpha_{x_1,\ldots,x_k}\ket{x_1}_{\sfS_1}\otimes\cdots\otimes\ket{x_k}_{\sfS_k},$$
%where, for $i\bound{1}{k}$, each registers $\sfT_i$ is owned by the target node $t_i$. 
Such a simulation is done for all the nodes in $\Vertice$. 
We refer to Appendix A for an example illustrating this strategy.

%and Proposition \ref{prop:nodesimulation} ensures 
%that this simulation is faithful, up an additional phase factor. 
In what follows, we denote by $\calB$ this strategy.
%% and by $\calB(\ket{\psi_S}_{\sfS_1,\ldots,\sfS_k})$  the final state after applying $\calB$. 
We first describe the behavior of $\calB$ when the input is a basis state.
\begin{lemma}\label{lemma:basis}
Let $x_1,\ldots,x_k$ be $k$ elements of $R$.
Then 
%% $$\calB(\ket{x_1}_{\sfS_1}\otimes\cdots\otimes\ket{x_k}_{\sfS_k})=e^{2\pi\iota h(x_1,\ldots,x_k)}\ket{x_1}_{\sfT_1}\otimes\cdots\otimes\ket{x_k}_{\sfT_k},$$
the state after applying $\calB$ to $\ket{x_1}_{\sfS_1}\otimes\cdots\otimes\ket{x_k}_{\sfS_k}$ is of the form
$$e^{2\pi\iota h(x_1,\ldots,x_k)}\ket{x_1}_{\sfT_1}\otimes\cdots\otimes\ket{x_k}_{\sfT_k},$$
where $h \colon R^k\to\bbQ$ is an additive group homomorphism depending only on the outcomes of the measurements done during the procedure.
Here, for each $i\bound{1}{k}$, the register $\sfT_i$ is owned by the target node $t_i$. 
\end{lemma}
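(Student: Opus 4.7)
The plan is to process the nodes of $G$ in a topological order and to apply Proposition~\ref{prop:nodesimulation} at each node, keeping track of both the basis labels on the edges and the accumulated phase. A key preliminary observation, following from the linearity of the classical solution, is that when the classical inputs at $s_1,\ldots,s_k$ are $x_1,\ldots,x_k$, the message sent along any given edge $e$ of $G$ in the classical scheme is a fixed linear combination $c_e(x_1,\ldots,x_k)=\sum_{i=1}^k \lambda_{e,i} x_i$ with coefficients $\lambda_{e,i}\in R$; in particular, $c_e$ is an additive group homomorphism from $R^k$ to $R$. The virtual input edge into $s_i$ carries $c_e=x_i$, and the virtual output edge out of $t_i$ also carries $x_i$ by correctness of the classical solution.

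Proceeding by induction on the topological order of $V$, I would show the following invariant. Just after processing node $v$, the register on each outgoing edge $e$ of $v$ contains the basis state $\ket{c_e(x_1,\ldots,x_k)}$, and a global scalar of the form $e^{2\pi\iota h_v(x_1,\ldots,x_k)}$ has been accumulated, where $h_v\colon R^k\to\bbQ$ is an additive group homomorphism depending only on the measurement outcomes observed so far. The base case (initial state) is immediate with $h=0$. For the inductive step at a node $v$ with fan-in $m$ and fan-out $n$, the $m$ incoming registers to $v$ contain, by the inductive hypothesis, the basis state $\ket{c_{e_1}(\bmx),\ldots,c_{e_m}(\bmx)}$; setting $y_j=c_{e_j}(\bmx)$ and invoking Proposition~\ref{prop:nodesimulation}, Procedure $\proc{Encoding}(f_{v,1},\ldots,f_{v,n})$ produces on the outgoing registers the basis state $\ket{f_{v,1}(y_1,\ldots,y_m),\ldots,f_{v,n}(y_1,\ldots,y_m)}$ together with an extra factor $e^{2\pi\iota g_v(y_1,\ldots,y_m)}$, where $g_v\colon R^m\to\bbQ$ is an additive group homomorphism determined by the measurement outcomes at $v$. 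Since the classical scheme is linear, the outgoing label matches $c_{e'}(\bmx)$ for each outgoing edge $e'$, and since each $c_{e_j}$ is an additive homomorphism from $R^k$ to $R$, the map $\bmx \mapsto g_v(c_{e_1}(\bmx),\ldots,c_{e_m}(\bmx))$ is an additive homomorphism from $R^k$ to $\bbQ$, which we add to $h_v$ to update the invariant.

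After all nodes have been processed, the virtual output edges at $t_1,\ldots,t_k$ (i.e.\ the registers $\sfT_1,\ldots,\sfT_k$) carry the basis state $\ket{x_1}_{\sfT_1}\otimes\cdots\otimes\ket{x_k}_{\sfT_k}$ by correctness of the classical solution, and the accumulated phase is $e^{2\pi\iota h(x_1,\ldots,x_k)}$ with $h=\sum_{v\in V} \bigl(g_v\circ(c_{e_1},\ldots,c_{e_m})\bigr)$, which, being a finite sum of additive group homomorphisms $R^k\to\bbQ$ determined by the measurement outcomes, is itself such an additive homomorphism. This is exactly the statement of the lemma.

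The main technical point, and the one I would flag as the substantive step, is the compatibility of the two notions of linearity involved: the classical scheme is $R$-linear (so each edge label is an $R$-linear, and in particular additive, function of $\bmx$), whereas the phase produced by Proposition~\ref{prop:nodesimulation} is only guaranteed to be an additive group homomorphism into $\bbQ$ (reflecting the fact that $\unitaryW$ is the Fourier transform of the additive group of $R$ and not a ring-theoretic object). The argument goes through smoothly precisely because the composition of an additive homomorphism with an $R$-linear map remains an additive homomorphism; if the classical scheme were only additive rather than $R$-linear the same proof would still work, but any non-additive classical operation at some node would break the invariant and is exactly where a more general classical scheme would fail to admit this simulation.
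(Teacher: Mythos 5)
Your proposal is correct and follows essentially the same route as the paper's proof: a node-by-node application of Proposition~\ref{prop:nodesimulation}, using the fact that each register label is a fixed $R$-linear (hence additive) function of $(x_1,\ldots,x_k)$, so that each per-node phase $g\circ(c_{e_1},\ldots,c_{e_m})$ and therefore their sum is an additive homomorphism determined by the measurement outcomes, with the final labels $x_1,\ldots,x_k$ at the targets guaranteed by correctness of the classical scheme. Your explicit topological-order induction merely formalizes what the paper states as ``at any step of the protocol the state is of the form $\beta\ket{y_1}\otimes\cdots\otimes\ket{y_D}$ with each $y_i$ a linear combination of the $x_j$'s.''
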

\begin{proof}[Proof of Lemma \ref{lemma:basis}]
Since the classical coding scheme is linear, Proposition \ref{prop:nodesimulation} ensures that, 
at any step of the protocol, the quantum state of the system is of the form
\begin{equation}\label{eq:generalform}
\beta\ket{y_1}_{\sfQ_1}\otimes\cdots\otimes\ket{y_D}_{\sfQ_D}
\end{equation}
for some positive integer $D$ (depending on the step of the protocol) 
and some phase $\beta$ (depending on the step of the protocol, the outcomes of the measurements done, and the values $x_1,\ldots,x_k$),
such that %for any $i\bound{1}{d}$, 
each $y_i\in R$ can be written as a linear combination of the $x_j$'s, in a way that corresponds to the associated classical coding scheme. Here each $\sfQ_i$ is a register owned by some node of the graph $G$.

Let us get back to the simulation at node $v$ described above to work out the general form of the phase $\beta$.
Suppose that the current state of the quantum system is given by Eq.~(\ref{eq:generalform}). 
Suppose, without loss of generality, that the coding at node $v$ is done on the first $m$ registers. 
In other words, the simulation performed at node $v$ is done over the state 
$\beta\ket{y_1,\ldots,y_m}_{(\sfQ_1,\ldots,\sfQ_m)}\otimes \ket{y_{m+1},\ldots,y_D}_{(\sfQ_{m+1},\ldots,\sfQ_{D})}$ 
where each $y_i=\sum_{j=1}^k \gamma_{i,j}x_j$ for some constants $\gamma_{i,j}\in R$ (depending on the step of the protocol).
Then, from Proposition \ref{prop:nodesimulation}, the simulation done at this step (using the procedure $\proc{Encoding}(f_{v,1},\ldots,f_{v,n})$) can be seen
as transforming this state into the state 
\[
%\begin{split}
\beta e^{\left(2\pi \iota h_v(x_1,\ldots,x_k)\right)} 
%& 
\ket{f_{v,1}(y_1,\ldots,y_m),\ldots,f_{v,n}(y_1,\ldots,y_m)}_{(\sfQ'_1,\ldots,\sfQ'_n)}
%\\
\otimes 
%& 
\ket{y_{m+1},\ldots,y_D}_{(\sfQ_{m+1},\ldots,\sfQ_{D})},
%\end{split}
\]
where, for $g$ denoting the function in the statement of Proposition \ref{prop:nodesimulation},
$$h_v(x_1,\ldots,x_k)=g\Bigl(\sum_{j=1}^k\gamma_{1,j}x_j,\ldots,\sum_{j=1}^k\gamma_{m,j}x_j\Bigr).$$
Since $g$ is a group homomorphism, the function $h_v \colon R^k\to\bbQ$ is a group homomorphism also.

From the observation that the classical coding scheme solves the associated classical $k$-pair problem, 
we conclude that
%% $$
%% \calB(\ket{x_1}_{\sfS_1}\otimes\cdots\otimes\ket{x_k}_{\sfS_k})=e^{2\pi \iota \sum_{v\in \Vertice} h_v(x_1,\ldots,x_k)}\ket{x_1}_{\sfT_1}\otimes\cdots\otimes\ket{x_k}_{\sfT_k}.
%% $$
the state after applying $\calB$ can be written as
$$
e^{2\pi \iota \sum_{v\in \Vertice} h_v(x_1,\ldots,x_k)}\ket{x_1}_{\sfT_1}\otimes\cdots\otimes\ket{x_k}_{\sfT_k}.
$$
The claimed form is obtained by defining the function $h$ as $h(x_1,\ldots,x_k)=\sum_{v\in \Vertice} h_v(x_1,\ldots,x_k)$.
Notice that $h$ is determined only by the values of the measurements (the constants $\gamma_{i,j}$ are fixed by the choice of the classical coding scheme). 
\end{proof}
Now we proceed with the proof of Theorem \ref{theorem1}. Lemma \ref{lemma:basis} implies that 
%% $$\calB(\ket{\psi_S}_{\sfS_1,\ldots,\sfS_k})=
%% \sum_{x_1,\ldots,x_k\in R}\alpha_{x_1,\ldots,x_k}e^{2\pi \iota h(x_1,\ldots,x_k)}
%% \ket{x_1}_{\sfT_1}\otimes\cdots\otimes\ket{x_k}_{\sfT_k},
%% $$
the state after applying $\calB$ must be of the form
$$
\sum_{x_1,\ldots,x_k\in R}\alpha_{x_1,\ldots,x_k}e^{2\pi \iota h(x_1,\ldots,x_k)}
\ket{x_1}_{\sfT_1}\otimes\cdots\otimes\ket{x_k}_{\sfT_k},
$$
where, for each $i\bound{1}{k}$, the register $\sfT_i$ is owned by the target node~$t_i$. 
Also, Lemma \ref{lemma:basis} guarantees that each target node $t_i$ knows the function $h$ 
since the values of all the measurement have been sent to it.
Since $h$ is an additive group homomorphism, it can be written as $h(x_1,\ldots,x_k)=h_1(x_1)+\cdots+h_k(x_k)$, where
%% , for each $i\bound{1}{k}$,
the function $h_i \colon R\to\bbQ$ maps $x_i$ to $h(0,\ldots,0,x_i,0,\ldots,0)$, for each $i\bound{1}{k}$.

Now for each $i\bound{1}{k}$ the target node $t_i$ applies the map $\unitaryY_i$ to its register~$\sfT_i$, where $\unitaryY_i$ is defined as
\begin{eqnarray*}
\unitaryY_i\colon\ket{x}&\mapsto&
e^{-2\pi \iota h_i(x)}\ket{x},
\end{eqnarray*}
for any $x\in R$.
This step corrects the phases and the resulting state is $\ket{\psi_S}_{(\sfT_1,\ldots,\sfT_k)}$.
This concludes the proof of Theorem \ref{theorem1}.
\end{proof}

In Theorem \ref{theorem1}, for the clarity of the proof, we gave the bound ${kM|V|\lceil\log_2 |R|\rceil}$ 
of the number of classical bits to be sent. For concrete networks, this bound can be improved significantly: 
(i) at each node, the measurement outcomes $a_1,\ldots,a_m$ have only to be sent to the target nodes $t_j$ 
such that $\gamma_{i,j}\neq 0$ for some index $i\bound{1}{m}$, and 
(ii) any node performing only a copy operation 
does not require any free classical communication to be simulated quantumly. 
Furthermore, we can reduce the amount of classical communication to $O(1)$ 
in the subclass of $k$-pair problems considered in Ref.~\cite{INPRY:2008}, as shown in the following corollary.

\begin{corollary}
Suppose that there exists a classical linear coding scheme over a constant-size finite field $\mathbb{F}$ 
that solves a $k$-pair problem for a fixed constant $k$. 
Then the corresponding quantum $k$-pair problem is solvable over $\Complex^{\abs{\mathbb{F}}}$ 
by sending at most $O(1)$ elements 
of $\mathbb{F}$ as free classical communication.
\end{corollary}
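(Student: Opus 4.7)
The plan is to invoke Theorem~\ref{theorem1} and then sharpen the communication bound $kM|V|$ using the two observations (i) and (ii) recorded immediately after its proof, together with the structural properties of the $k$-pair instances studied in Ref.~\cite{INPRY:2008}. Since $k$ and $|\mathbb{F}|$ are constants, it will suffice to show that the number of nodes that actually need to broadcast their measurement outcomes, the fan-in of each such node, and the number of target nodes to which each outcome must be sent, are all bounded by constants (depending only on $k$ and the class of networks in Ref.~\cite{INPRY:2008}).

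First, I would apply Theorem~\ref{theorem1} to the linear solution over $\mathbb{F}$ guaranteed by the hypothesis, obtaining a node-by-node quantum simulation. By observation~(ii), any node that merely copies its single input can be simulated unitarily without a Fourier-basis measurement, so it contributes zero classical communication; hence we may restrict attention to the genuinely encoding nodes, i.e., those combining at least two incoming registers via a nontrivial $\mathbb{F}$-linear function. The result of Ref.~\cite{INPRY:2008} on $k$-pair instances (for constant $k$) produces linear coding schemes whose number of such encoding nodes is $O(1)$ and whose fan-in at each encoding node is $O(1)$; I would quote this structural fact directly and use it here.

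Next, I would apply observation~(i): at each of these $O(1)$ encoding nodes, each measurement outcome $a_j$ only needs to be forwarded to a target $t_\ell$ whose recovered value $x_\ell$ depends on the corresponding variable $y_j$ in the linear combination — and there are at most $k=O(1)$ such targets. Multiplying the three constants (number of encoding nodes, fan-in per such node, and number of relevant targets) yields $O(1)$ field elements of $\mathbb{F}$ broadcast in total; since $|\mathbb{F}|=O(1)$ as well, this is $O(1)$ bits. The phase-correction step at each $t_i$ is still local and requires no extra resources.

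The main obstacle I anticipate is matching the corollary's hypothesis to the precise class considered in Ref.~\cite{INPRY:2008}: one needs the linear coding scheme produced there to have both a bounded number of encoding nodes and bounded fan-in at each, independent of $|V|$. If the result in Ref.~\cite{INPRY:2008} is stated only in terms of the number of coding operations along source–target paths rather than the number of encoding nodes, a short argument converts one into the other by collapsing repeated copy nodes (observation~(ii) makes such copies free for our purposes). Once this correspondence is set up, the three constants above combine to the advertised $O(1)$ bound.
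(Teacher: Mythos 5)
Your proposal is correct and follows essentially the same route as the paper: the paper's proof simply cites Iwama et al.~\cite{INPRY:2008} for the fact that, when $k$ and $|\mathbb{F}|$ are constant, one can take a linear coding scheme whose total number of non-trivial linear operations is a constant, and then concludes via Theorem~\ref{theorem1} together with the observations (i) and (ii) that copy nodes cost nothing and outcomes go to at most $k$ targets. Your extra care about bounding the fan-in of the encoding nodes is a reasonable elaboration of the same argument, not a different approach.
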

\begin{proof}
Iwama et al.~\cite{INPRY:2008} showed that if $k$ and $|\mathbb{F}|$ are constant, 
we can find a classical linear coding scheme such that the total number of 
non-trivial linear operations is a constant (only depending on $k$ and $|\mathbb{F}|$).   
Then the corollary follows from Theorem \ref{theorem1}.
\end{proof}

%================================
\subsection{Vector-linear coding}
%================================
This subsection shows how to simulate classical vector-linear coding over any ring $R$ 
(possibly not commutative). This is one of the most general settings considered in the literature (see Ref.~\cite{DFZ:2005} for instance).  
%This is a generalization of the work of the previous section.

%This can also be seen as as performing coding over the $R$-module $R^n$.
%Let $G$ and $(s_i,t_i)$ be an instance of the classical $k$-pair problem. 
Let $R$ be a finite ring. Let $\Sigma$ be the $R$-module $R^q$ for some positive integer~$q$. 
Informally, this module is the analogue of the usual vector space $\mathbb{F}^q$ of dimension~$q$ 
over a finite field $\mathbb{F}$, but here $\mathbb{F}$ is replaced by the ring $R$. 
%As described in Section \ref{section:statement}, the goal of classical coding is 
%to send,  for $i\in\{1,\ldots,k\}$, a message $\myvec{x_i}\in \Sigma$ from the source $s_i$ to the targets $t_i$. 
%for each $i\in\{1,\ldots,k\}$ \footnote{In this subsection we use the notation $\myvec{x}$ to denote elements of the module $\Sigma=R^q$}.
A coding scheme is said \emph{$q$-vector-linear} over $R$ if, for each function
$f_{v,i}\colon (R^q)^m\to R^q$ corresponding to the $i$-th encoding operation performed at the node $v\in\Vertice$,
there exist matrices~$B^{(v,i)}_j$ of size $q\times q$ over $R$ such that $f_{v,i}(\myvec{y_1},\ldots,\myvec{y_m})=\sum_{j=1}^m B^{(v,i)}_j\myvec{y_j}$
for all $(\myvec{y_1},\ldots,\myvec{y_m})\in(R^q)^m$.

Again let $\phi$ be a group isomorphism from $R$ to an abelian group $\Int_{r_1}\times\cdots\times\Int_{r_\ell}$, and, 
for any $x\in R$ and $i\bound{1}{\ell}$, denote by $\phi_i(x)$ the $i$-th coordinate of~$\phi(x)$, i.e.,~an element of $\Int_{r_i}$.
Given an element $\myvec{x}=(x_1,\ldots,x_q)\in R^q$, let $\psi_i(\myvec{x})$ denote the element $(\phi_i(x_1),\ldots,\phi_i(x_q))$
in $\Int_{r_i}^q$ corresponding to the projections of all the coordinates of 
$\myvec{x}$ to $\Int_{r_i}$. 

The following theorem is proved in a manner similar to Theorem \ref{theorem1}.
\begin{theorem}\label{theorem2}
Let $G=(\Vertice,\Edge)$ be a directed graph and $(s_1,t_1),\ldots,(s_k,t_k)$ be $k$ pairs of nodes.
Let $M$ be the maximal fan-in of nodes in $G$, $R$ be a finite ring and~$q$ be a positive integer.
Suppose that there exists a $q$-vector-linear solution over $R$ to the associated classical $k$-pair problem.
Then the corresponding quantum $k$-pair problem is solvable over $\Complex^{|R|^q}$. 
Moreover, there exists a quantum protocol that sends at most $kM\abs{V}$ elements of $R^q$ as free classical communication.
\end{theorem}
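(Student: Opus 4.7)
The plan is to mimic the proof of Theorem \ref{theorem1} verbatim, replacing the ring $R$ by the $R$-module $R^q$ throughout, and relying on the crucial fact that any $q$-vector-linear map is in particular an additive group homomorphism between the underlying abelian groups. First I would let $\calH=\Complex^{|R|^q}$ with orthonormal basis $\{\ket{\myvec{z}}\}_{\myvec{z}\in R^q}$, and redefine the Fourier unitary $\unitaryW$ on $\calH$ by
\[
\unitaryW\ket{\myvec{y}}
=\frac{1}{\sqrt{|R|^q}}\sum_{\myvec{z}\in R^q}
\exp\Bigl(2\pi \iota\sum_{i=1}^\ell\frac{\langle \psi_i(\myvec{y}),\psi_i(\myvec{z})\rangle}{r_i}\Bigr)\ket{\myvec{z}},
\]
where $\langle\cdot,\cdot\rangle$ is the standard inner product in $\Int_{r_i}^q$ lifted to $\{0,\ldots,r_i-1\}^q$. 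This is nothing but the quantum Fourier transform over the additive group of $R^q\cong\prod_{i=1}^\ell\Int_{r_i}^q$. The operator $\unitaryU_{f_1,\ldots,f_n}$ is defined on $\calH^{\otimes m}\otimes\calH^{\otimes n}$ by the obvious analogue, adding $f_i(\myvec{y}_1,\ldots,\myvec{y}_m)\in R^q$ coordinatewise into the $i$-th output register. The procedure $\proc{Encoding}(f_1,\ldots,f_n)$ is then exactly as in Section \ref{subsection:ring}, but produces $m$ measurement outcomes $\myvec{a}_1,\ldots,\myvec{a}_m\in R^q$.

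Next I would reprove the analogue of Proposition \ref{prop:nodesimulation}: for any basis input $\ket{\myvec{y}_1,\ldots,\myvec{y}_m}$, the output registers hold
\[
\exp\bigl(2\pi\iota\, g(\myvec{y}_1,\ldots,\myvec{y}_m)\bigr)
\ket{f_1(\myvec{y}_1,\ldots,\myvec{y}_m),\ldots,f_n(\myvec{y}_1,\ldots,\myvec{y}_m)},
\]
with $g(\myvec{y}_1,\ldots,\myvec{y}_m)=\sum_{i=1}^\ell\sum_{j=1}^m\frac{\langle\psi_i(\myvec{a}_j),\psi_i(\myvec{y}_j)\rangle}{r_i}$. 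The calculation is line-by-line the same as before; the only thing that changes is that each scalar product $\phi_i(y)\cdot\phi_i(z)$ is replaced by the inner product $\langle\psi_i(\myvec{y}),\psi_i(\myvec{z})\rangle$. Crucially, $g$ remains an additive group homomorphism $(R^q)^m\to\bbQ$.

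I would then run the node-by-node simulation strategy $\calB$ of Theorem \ref{theorem1}, broadcasting the $m$ measurement outcomes at each node (each in $R^q$) to every target, yielding at most $kM\abs{V}$ elements of $R^q$ of free classical communication in total. Here the single place where vector-linearity is actually used is the induction on the simulation: at every step each register stores $\myvec{y}_i=\sum_{j=1}^k \Gamma_{i,j}\myvec{x}_j$ for some fixed $q\times q$ matrices $\Gamma_{i,j}$ over $R$ determined by the classical coding scheme, because the encoding maps $f_{v,i}$ are sums of left multiplications by such matrices and are therefore additive; composition of additive maps with the group-homomorphism $g$ at each node produces a group-homomorphism $h_v\colon (R^q)^k\to\bbQ$. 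Summing, the final accumulated phase is $h=\sum_{v\in\Vertice}h_v$, again a group homomorphism on $(R^q)^k$, and hence splits as $h(\myvec{x}_1,\ldots,\myvec{x}_k)=\sum_{i=1}^k h_i(\myvec{x}_i)$ where $h_i(\myvec{x})=h(\myvec{0},\ldots,\myvec{0},\myvec{x},\myvec{0},\ldots,\myvec{0})$. Each target $t_i$ knows $h_i$ (it holds all measurement outcomes) and applies the local phase correction $\unitaryY_i\colon\ket{\myvec{x}}\mapsto e^{-2\pi\iota h_i(\myvec{x})}\ket{\myvec{x}}$, recovering the input state.

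I do not anticipate any genuine obstacle, since the argument is structurally identical to that of Theorem \ref{theorem1}. The only step worth double-checking is the generalization of Proposition \ref{prop:nodesimulation}, where one has to verify that the QFT over $R^q$ (defined via the isomorphism $\psi$) still produces a phase of the form $\exp(2\pi\iota g)$ with $g$ additive. Everything else propagates automatically because the class of additive group homomorphisms is closed under composition and addition, and because $q$-vector-linear maps are, by definition, additive on the underlying abelian group $R^q$.
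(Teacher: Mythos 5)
Your proposal is correct and follows essentially the same route as the paper: the paper's own proof of Theorem~\ref{theorem2} simply replaces $\unitaryW$ by the quantum Fourier transform over the additive group $A^q$ of $R^q$ (exactly the operator $\unitaryW'$ you define via the inner products $\psi_i(\myvec{y})\cdot\psi_i(\myvec{z})$) and observes that all results of Subsection~\ref{subsection:ring} carry over unchanged. Your more detailed verification---that the measured phase $g$ remains an additive group homomorphism and that vector-linearity keeps every register of the form $\sum_j \Gamma_{i,j}\myvec{x}_j$, so the accumulated phase splits and can be corrected locally at each target---is precisely the content the paper leaves implicit.
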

\begin{proof}
%Note that each register contains a quantum state over a Hilbert space $\calH$ of dimension $\abs{R}^q$. 
All the results of Subsection \ref{subsection:ring} hold similarly by using the following Fourier transform $\unitaryW'$ 
instead of $\unitaryW$. The unitary operator $\unitaryW'$ is defined over the Hilbert space $\calH=\Complex^{|R|^q}$ by its action on the basis states of $\calH$: 
for any $\myvec{y}\in R^q$, the operator $\unitaryW'$ maps the state $\ket{\myvec{y}}$ to the state
$$\frac{1}{\sqrt {|R|^q}}\sum_{\myvec{z}\in R^q}\exp\Bigl(2\pi \iota \sum_{i=1}^\ell\frac{\psi_i(\myvec{y})\cdot\psi_i(\myvec{z})}{r_i}\Bigr)\ket{\myvec{z}},$$
where $\psi_i(\myvec{y})\cdot\psi_i(\myvec{z})$ denotes the inner product of the vectors $\psi_i(\myvec{y})$ and $\psi_i(\myvec{z})$.
If we denote by $A$ the abelian group of $R$, then $\unitaryW'$ is basically the quantum Fourier transform over the abelian group $A^q$.
\end{proof}

%\subsection{Applications}\begin{proposition}
%Suppose that there exists a classical vector linear coding scheme over a finite field $F$ of constant size.
%Then quantum coding can be performed on the same graph by sending at most $O(dk)$ elements of $F$ as free classical communication.
%\end{proposition}\begin{proof}
%From a result by Iwama et al.~\cite{INPRY:2008}, there exists a classical linear coding scheme requiring encoding at only a %constant numberof nodes. Then the result follows.\end{proof}

%=========================================================================================================
\section{Concluding remarks}
%=========================================================================================================
This paper has presented a protocol to achieve perfect quantum network coding with
free classical communication. The proposed protocol works for all $k$-pair
problems that can be solved by linear or by vector-linear coding over
any finite ring, encompassing a broad class of networks that 
have been studied classically. 

There are still several open problems. A natural question is whether
perfect quantum network coding (with free classical communication) is possible for any instance of the
$k$-pair problem solvable classically.  Another open problem is a
converse of the results of this paper, i.\,e., to determine whether
there exists an undirected network such that quantum coding is
possible (with free classical communication) but classical coding is
not possible. 
%Currently only an upper bound of the rate of quantum
%network coding is known in terms of the cut of the graph as given in
%Appendix B.

%\section*{Acknowledgments}
\subsubsection*{Acknowledgments}
\sloppy{
HK is partially supported by the Grant-in-Aid for Scientific Research~(B) Nos.~18300002~and~21300002
of the Ministry of Education, Culture, Sports, Science and Technology of Japan.
HN is partially supported by the Grant-in-Aid for Young Scientists (B)
No. 19700011 of the Ministry of Education, Culture, Sports, Science and
Technology of Japan.
}

%\bibliographystyle{IEEEtranS}
%\bibliographystyle{abbrv}
%\bibliography{KobLeGNisRotICALP09}
%\section*{Appendix}

%========================================
%\section*{Appendix: Example}
%========================================
%%%%%%%%%%%%%%%%%%%%%%%%%%%%%%%%%%%
%\noindent {\Large \bf Appendix}\vspace{3mm}\\
%%%%%%%%%%%%%%%%%%%%%%%%%%%%%%%%%%%
%\noindent {\large \bf A.1 Example}\vspace{3mm}

\appendix

\section{Example for our Protocol}

In this section, we illustrate the techniques developed in Section \ref{section:main} 
for the butterfly network shown in Figure~\ref{fig:butterfly}. Our task is to send 
a quantum state $\ket{\psi_S}$ from the source nodes $s_1$ and $s_2$ to the target nodes $t_1$ and $t_2$. 
The task can be achieved perfectly, i.\,e., with fidelity one,
using the protocol given in Theorem~\ref{theorem1}.
We give the explicit details for this example.
More precisely, we describe how the protocol simulates
the classical linear coding scheme over $R=\bbF_2$ presented in Figure~\ref{fig:classicalbutterfly}.
%The protocol applies the fan-out operations at nodes~$s_1$,~$s_2$,~and~$n_2$,
%while performs appropriate quantum coding operations at nodes~$n_1$,~$t_1$,~and~$t_2$.
  %To simplify the description, we defer the measurements of
  %the internal nodes until the very end. 
  %We also use the qubits in a
  %specific order which should be obvious from the protocol and we
  %indicate the controlled operations as a shorthand $(a,b)$
  %representing ${\rm CNOT}^{(a,b)}$ between qubits $a$ and $b$: 
Hereafter, all the registers are assumed to be single-qubit registers, 
i.e., $\calH=\Complex^2$. All the registers are also supposed to be initialized to $\ket{0}$.
We denote by $\{\ket{z}\}_{z\in\bbF_2}$ an orthonormal basis of $\calH$. 
According to our protocol, the measurement results at each node are sent 
to both $t_1$ and $t_2$, and the measured registers are disregarded.

\begin{figure}[t]
\begin{center}
\setlength{\unitlength}{0.25mm}
\begin{picture}(220,230)
%\linethickness{0.5pt}
\thinlines
\put( 20, 220){\makebox(0,0){$\sfS_1$}}
\put(200, 220){\makebox(0,0){$\sfS_2$}}
\put( 20, 210){\vector(0,-1){20}}
\put(200, 210){\vector(0,-1){20}}
%\put(  20, 200){\vector( 0,-1){12.99}}
%\put(  20, 210){\makebox(0,0){$\sfS_1$}}
%\put( 40, 200){\vector(-1,-1){12.99}}
%\put( 40, 210){\makebox(0,0){$\sfS'_1$}}
%\put(180, 200){\vector( 1,-1){12.99}}
%\put(180, 210){\makebox(0,0){$\sfS_2$}}
%\put(220, 200){\vector(-1,-1){12.99}}
%\put(220, 210){\makebox(0,0){$\sfS'_2$}}
%% \put( 20, 220){\makebox(0,0){$\sfS'_1$}}
%% \put(200, 220){\makebox(0,0){$\sfS'_2$}}
%% \put( 20, 210){\vector(0,-1){20}}
%% \put(200, 210){\vector(0,-1){20}}
%\linethickness{1pt}
\thicklines
\put( 20, 180){\circle{20}}
\put( 20, 180){\makebox(0,0){$s_1$}}
\put( 20, 170){\vector(0,-1){120}}
\put( 10, 115){\makebox(0,0){$\sfR_1$}}
\put( 28.94, 175.53){\vector(2,-1){72.12}}
\put( 65, 170){\makebox(0,0){$\sfR_2$}}
\put(200, 180){\circle{20}}
\put(200, 180){\makebox(0,0){$s_2$}}
\put(200, 170){\vector(0,-1){120}}
\put(210, 115){\makebox(0,0){$\sfR_3$}}
\put(191.06, 175.53){\vector(-2,-1){72.12}}
\put(155, 170){\makebox(0,0){$\sfR_4$}}
\put(110, 135){\circle{20}}
\put(110, 135){\makebox(0,0){$n_1$}}
\put(110, 125){\vector(0,-1){30}}
\put(100, 115){\makebox(0,0){$\sfR_5$}}
\put(110,  85){\circle{20}}
\put(110,  85){\makebox(0,0){$n_2$}}
\put(101.06,  80.53){\vector(-2,-1){72.12}}
\put( 65,  75){\makebox(0,0){$\sfR_6$}}
\put(118.94,  80.53){\vector(2,-1){72.12}}
\put(155,  75){\makebox(0,0){$\sfR_7$}}
\put( 20,  40){\circle{20}}
\put( 20,  40){\makebox(0,0){$t_2$}}
\put(200,  40){\circle{20}}
\put(200,  40){\makebox(0,0){$t_1$}}
%\linethickness{0.5pt}
\thinlines
\put( 20,  30){\vector(-0,-1){20}}
\put(  20,  0){\makebox(0,0){$\sfT_2$}}
%\put( 27.01,  32.99){\vector( 1,-1){12.99}}
%\put( 40,  10){\makebox(0,0){$x_2$}}
%\put(192.99,  32.99){\vector(-1,-1){12.99}}
%\put(180,  10){\makebox(0,0){$x_1$}}
\put(200,  30){\vector( 0,-1){20}}
\put(200,  0){\makebox(0,0){$\sfT_1$}}
\end{picture}
\caption{
  Perfect quantum network coding with free classical communication through the butterfly network. 
Each edge has quantum capacity one. The task is to send a given input quantum state~$\ket{\psi_S}$ in ${(\sfS_1, \sfS_2)}$
  to ${(\sfT_1, \sfT_2)}$ in this order of registers.
  Here, the quantum register~$\sfS_1$~(resp.~$\sfS_2$) is possessed by the source node~$s_1$~(resp.~$s_2$),
  while the quantum register~$\sfT_1$~(resp.~$\sfT_2$) is possessed by the target node~$t_1$~(resp.~$t_2$).
%%   There are two communication scenarios, namely arising from
%%   (i) identifying the pairs $(s_1,t_1)$ and $(s_2,t_2)$,
%%   or (ii) identifying the pairs $(s_1,t_2)$ and $(s_2,t_1)$.
  The protocol given in Theorem~\ref{theorem1} realizes
  perfect quantum transmission of $\ket{\psi_S}$.
  Each $\sfR_i$ indicates the quantum register to be sent
  along the corresponding edge in the protocol.
  %The quantum registers~$\sfS'_1$,~$\sfS'_2$,~$\sfT'_1$,~and~$\sfT'_2$
  %possessed by the source nodes~$s_1$~and~$s_2$
  %and the target nodes~$t_1$~and~$t_2$, respectively,
  %are used at the stage of sharing the cat states.
  Overall, a total of seven qubits of communication are necessary
  to transfer the state from the source to the target registers.
  \label{fig:butterfly}
}
\end{center}
\end{figure}
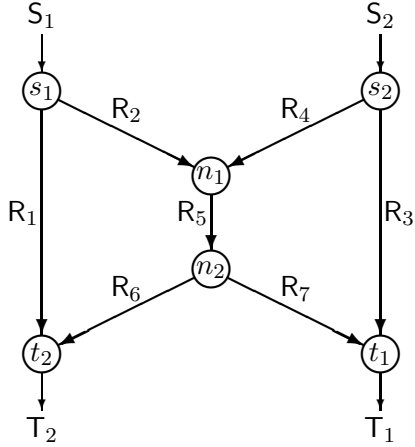

In our example, the quantum procedure $\proc{Encoding}(f_I,f_I)$ is applied at nodes $s_1,s_2$ and $n_2$, 
where $f_I$ denotes the identity map over $\bbF_2$. This procedure is implemented using
the following two unitary operators $\unitaryW$ and $\unitaryU_{f_I,f_I}$. The operator $\unitaryW$ is the Hadamard operator, 
and operator $\unitaryU_{f_I,f_I}$
maps each basis state $\ket{y}\ket{z_1,z_2}$ to the state $\ket{y}\ket{z_1+y,z_2+y}$.
%the measurement done in the Hadamard basis since it is equivalent 
%to the measurement in the computational basis $\{\ket{z}\}_{z\in\bbF_2}$ 
%in Step 4 after applying a Hadamard operator in Step 3. 
Procedure $\proc{Encoding}(f_+)$ is applied at nodes $n_1,t_1$ and $t_2$, 
where $f_+\colon(\bbF_2)^2\to\bbF_2$ is the function mapping $(y_1,y_2)$ to the element $y_1+y_2$ of $\bbF_2$.   
This procedure is implemented using $\unitaryW$ and the unitary operator $\unitaryU_{f_+}$ mapping 
$\ket{y_1,y_2}\ket{z}$ to $\ket{y_1,y_2}\ket{z+y_1+y_2}$. 
%for Step 2 and is also implemented by the controlled-NOT operator. 
%Moreover, Steps 3 and 4 corresponds to the measurement 
%in the Hadamard basis, which is implemented using $\calW$. 
Notice that both $\unitaryU_{f_I,f_I}$ and $\unitaryU_{f_+}$ can be implemented by using controlled-NOT operators
(hereafter, given two registers $\sfR$ and $\sfR'$, 
let $\CNOT^{(\sfR, \sfR')}$ denote the controlled-NOT operators mapping the basis state $\ket{z}_{\sfR}\ket{z'}_{\sfR'}$ 
to $\ket{z}_{\sfR}\ket{z+z'}_{\sfR'}$).

Now we present our protocol for the example of Figure~\ref{fig:butterfly}. Let 
$$
\ket{\psi_S}_{(\sfS_1,\sfS_2)}=
\alpha_{00}\ket{0}_{\sfS_1}\ket{0}_{\sfS_2}+\alpha_{01}\ket{0}_{\sfS_1}\ket{1}_{\sfS_2}+
\alpha_{10}\ket{1}_{\sfS_1}\ket{0}_{\sfS_2}+\alpha_{11}\ket{1}_{\sfS_1}\ket{1}_{\sfS_2}
$$
be the state that has to be sent from the source nodes to the target nodes. 
The two nodes $s_1$ and $s_2$ first implement the procedure $\proc{Encoding}(f_I,f_I)$ over their
registers $\sfS_1$ and $\sfS_2$, respectively. 
More precisely, at Step~1 of the procedure, $s_1$ (resp.~$s_2$) introduces 
two registers~$\sfR_1$~and~$\sfR_2$ (resp.~$\sfR_3$~and~$\sfR_4$), and, at Step~2, 
applies the operators $\CNOT^{(\sfS_1, \sfR_1)}$ and $\CNOT^{(\sfS_1, \sfR_2)}$
(resp.~$\CNOT^{(\sfS_2, \sfR_3)}$ and $\CNOT^{(\sfS_2, \sfR_4)}$) to implement ${\unitaryU}_{f_I,f_I}$. 
The resulting state is
\[
\begin{split}
\hspace{5mm}
&
%\hspace{-5mm}
 \alpha_{00}\ket{\bmzero}_{(\sfS_1, \sfR_1, \sfR_2)} \ket{\bmzero}_{(\sfS_2, \sfR_3, \sfR_4)}\\
  +&
  \alpha_{01}\ket{\bmzero}_{(\sfS_1, \sfR_1, \sfR_2)}\ket{\bmone}_{(\sfS_2, \sfR_3, \sfR_4)}\\
  +
  &
  \alpha_{10}\ket{\bmone}_{(\sfS_1, \sfR_1, \sfR_2)}\ket{\bmzero}_{(\sfS_2, \sfR_3, \sfR_4)}\\
  +&
  \alpha_{11}\ket{\bmone}_{(\sfS_1, \sfR_1, \sfR_2)}\ket{\bmone}_{(\sfS_2, \sfR_3, \sfR_4)}
.
\end{split}
\]
Hereafter, let $\bmzero$ and $\bmone$ denote strings of all-zero and all-one,
respectively, of appropriate length (three here). 
At Step~3, the operator $\unitaryW$ is applied to each register $\sfS_1$ and $\sfS_2$. At Step~4, these two registers are  
measured in the basis $\{\ket{z}\}_{z\in\bbF_2}$. Notice that the combination of Steps~3 ~and~4 corresponds to measurements in 
the Hadamard basis.
Let $a\in\bbF_2$ and $b\in\bbF_2$ denote the measurement outcomes.
%Now $\sfR_1$ and $\sfR_2$ (resp.~$\sfR_3$~and~$\sfR_4$) and $a$ (resp.~$b$) 
%are the output of $\proc{Encoding}(f_I,f_I)$ at node $s_1$ (resp.~$s_2$). 
The quantum state becomes
\[
\begin{split}
&
 \alpha_{00}\ket{\bmzero}_{(\sfR_1, \sfR_2)} \ket{\bmzero}_{(\sfR_3, \sfR_4)}\\
  +(-1)^{b}&\alpha_{01}\ket{\bmzero}_{(\sfR_1, \sfR_2)}\ket{\bmone}_{(\sfR_3, \sfR_4)}
  \\
  +(-1)^{a}&\alpha_{10}\ket{\bmone}_{(\sfR_1, \sfR_2)}\ket{\bmzero}_{(\sfR_3, \sfR_4)}
  \\
  +
  (-1)^{a+b}&\alpha_{11}\ket{\bmone}_{(\sfR_1, \sfR_2)}\ket{\bmone}_{(\sfR_3, \sfR_4)}
.
\end{split}
\]
Then the registers~$\sfR_1$~and~$\sfR_2$ are sent to $t_2$ and $n_1$, respectively,
while $\sfR_3$~and~$\sfR_4$ are sent to $t_1$ and $n_1$, respectively. 
The measurement outcomes $a$ and $b$ are sent to both target nodes by classical communication. 

Then the protocol proceeds with the simulation of the coding operation performed at node~$n_1$ in the classical coding scheme of
Figure~\ref{fig:classicalbutterfly} using the quantum procedure $\proc{Encoding}(f_+)$: Node~$n_1$ prepares a new register~$\sfR_5$ at Step~1, 
and applies the operators $\CNOT^{(\sfR_2, \sfR_5)}$ and $\CNOT^{(\sfR_4, \sfR_5)}$ to implement $\unitaryU_{+}$ at Step~2. 
The resulting state is
\[
\begin{split}
&
 \alpha_{00}\ket{\bmzero}_{(\sfR_1, \sfR_2)} \ket{\bmzero}_{(\sfR_3, \sfR_4)} \ket{0}_{\sfR_5}\\
  +
  (-1)^{b}&
  \alpha_{01}\ket{\bmzero}_{(\sfR_1, \sfR_2)} \ket{\bmone}_{(\sfR_3, \sfR_4)}\ket{1}_{\sfR_5}\\
  +
  (-1)^{a}&
  \alpha_{10}\ket{\bmone}_{(\sfR_1, \sfR_2)} \ket{\bmzero}_{(\sfR_3, \sfR_4)}\ket{1}_{\sfR_5}\\
  +
  (-1)^{a+b}&\alpha_{11}\ket{\bmone}_{(\sfR_1, \sfR_2)} \ket{\bmone}_{(\sfR_3, \sfR_4)}\ket{0}_{\sfR_5}
.
 \end{split}
\]
At Steps 3 and 4, the registers $\sfR_2$ and $\sfR_4$ are measured 
in the Hadamard basis. The measurement outcomes, denoted by $c_1$ and $c_2$, are sent 
to both target nodes. The quantum state becomes
 \[
 \begin{split}
&
  \alpha_{00}\ket{0}_{\sfR_1} \ket{0}_{\sfR_3} \ket{0}_{\sfR_5}\\
  +
  (-1)^{b+c_2}&
  \alpha_{01}\ket{0}_{\sfR_1}\ket{1}_{\sfR_3}\ket{1}_{\sfR_5}\\
  +
  (-1)^{a+c_1}&\alpha_{10}\ket{1}_{\sfR_1}\ket{0}_{\sfR_3}\ket{1}_{\sfR_5}\\
  +
  (-1)^{a+b+c_1+c_2}&\alpha_{11}\ket{1}_{\sfR_1}\ket{1}_{\sfR_3}\ket{0}_{\sfR_5}
  .
  \end{split}
\]
The register~$\sfR_5$ is then sent to $n_2$.

The node~$n_2$ now implements the procedure $\proc{Encoding}(f_I,f_I)$: It prepares 
two registers~$\sfR_6$~and~$\sfR_7$, applies the operators $\CNOT^{(\sfR_5, \sfR_6)}$ 
and $\CNOT^{(\sfR_5, \sfR_7)}$ to implement $\unitaryU_{f_I,f_I}$, and measures register $\sfR_5$ in the Hadamard basis. 
The outcome of this measurement, $d$, is sent to $t_1$ and $t_2$. The resulting state is
\[
\begin{split}
&
  \alpha_{00}\ket{0}_{\sfR_1} \ket{0}_{\sfR_3} \ket{\bmzero}_{(\sfR_6, \sfR_7)}\\
  +
  (-1)^{b+c_2+d}&
  \alpha_{01}\ket{0}_{\sfR_1} \ket{1}_{\sfR_3} \ket{\bmone} _{(\sfR_6, \sfR_7)}\\
  +
  (-1)^{a+c_1+d}&
  \alpha_{10}\ket{1}_{\sfR_1} \ket{0}_{\sfR_3} \ket{\bmone}_{(\sfR_6, \sfR_7)}\\
  + (-1)^{a+b+c_1+c_2}&
  \alpha_{11}\ket{1}_{\sfR_1} \ket{1}_{\sfR_3} \ket{\bmzero}_{(\sfR_6, \sfR_7)}
  ,
  \end{split}
\]
and the registers~$\sfR_6$~and~$\sfR_7$ are sent to $t_2$ and $t_1$, respectively.

In the last step of the simulation, both target nodes $t_1$ and $t_2$ apply the procedure $\proc{Encoding}(f_+)$: 
Node $t_1$~(resp.~$t_2$) prepares one register~$\sfT_1$ (resp.~$\sfT_2$),
and applies the operators $\CNOT^{(\sfR_3, \sfT_1)}$ and $\CNOT^{(\sfR_7, \sfT_1)}$ 
(resp.~$\CNOT^{(\sfR_1, \sfT_2)}$ and $\CNOT^{(\sfR_6, \sfT_2)}$) for $\unitaryU_{+}$. 
The resulting state is
\[
\begin{split}
&
\alpha_{00}\ket{0}_{\sfR_1} \ket{0}_{\sfR_3} \ket{\bmzero}_{(\sfR_6, \sfR_7)} \ket{0}_{\sfT_1} \ket{0}_{\sfT_2}\\
  +
  (-1)^{b+c_2+d}&\alpha_{01}\ket{0}_{\sfR_1}\ket{1}_{\sfR_3}\ket{\bmone}_{(\sfR_6, \sfR_7)}\ket{0}_{\sfT_1} \ket{1}_{\sfT_2}\\
  +
  (-1)^{a+c_1+d}&\alpha_{10}\ket{1}_{\sfR_1}\ket{0}_{\sfR_3}\ket{\bmone}_{(\sfR_6, \sfR_7)} \ket{1}_{\sfT_1} \ket{0}_{\sfT_2}\\
  +
   (-1)^{a+b+c_1+c_2}&\alpha_{11}\ket{1}_{\sfR_1}\ket{1}_{\sfR_3}\ket{\bmzero}_{(\sfR_6, \sfR_7)} \ket{1}_{\sfT_1} \ket{1}_{\sfT_2}\,.
\end{split}
\]
Then $t_1$ (resp.~$t_2$) measures registers $\sfR_3$ and $\sfR_7$ (resp.~$\sfR_1$ and $\sfR_6$) 
in the Hadamard basis. Let $e_1$ and $e_2$ (resp.~$f_1$ and $f_2$) be the outcomes of the measurements. 
The quantum state becomes
\[
\begin{split}
&
  \alpha_{00}\ket{0}_{\sfT_1} \ket{0}_{\sfT_2}\\
  +
  (-1)^{b+c_2+d+e_1+e_2+f_2}&\alpha_{01}\ket{0}_{\sfT_1} \ket{1}_{\sfT_2}\\
  +
  (-1)^{a+c_1+d+e_2+f_1+f_2}&\alpha_{10}\ket{1}_{\sfT_1} \ket{0}_{\sfT_2}\\
  +
   (-1)^{a+b+c_1+c_2+e_1+f_1}&\alpha_{11}\ket{1}_{\sfT_1} \ket{1}_{\sfT_2}.
\end{split}
\]

Now let  $h_1,h_2\colon\bbF_2\rightarrow\bbF_2$ be the functions defined by
\[
h_1(z)=(a+c_1+d+e_2+f_1+f_2)z\,,
\]
\[
h_2(z)=(b+c_2+d+e_1+e_2+f_2)z\,,
\]
for any $z\in\bbF_2$. The first key observation is that the above state can be rewritten as
\[
\sum_{z_1,z_2\in\bbF_2}(-1)^{h_1(z_1)+h_2(z_2)}\alpha_{z_1z_2}\ket{z_1}_{\sfT_1} \ket{z_2}_{\sfT_2}\,.
\]
The second key observation is that, since the nodes $t_1$ and $t_2$  received the outcomes of the intermediate measurements, 
they know the functions $h_1$ and $h_2$. 
Node $t_1$ (resp.~node $t_2$) then applies the quantum operation $\unitaryY_1$ (resp.~$\unitaryY_2$) 
mapping, for any $z\in\bbF_2$, 
the basis state $\ket{z}_{\sfT_1}$ to the state $(-1)^{h_1(z)}\ket{z}_{\sfT_1}$ (resp.~mapping $\ket{z}_{\sfT_2}$ to the state $(-1)^{h_2(z)}\ket{z}_{\sfT_2}$). 
The quantum state becomes 
$$\ket{\psi_S}_{(\sfT_1,\sfT_2)}=
\alpha_{00}\ket{0}_{\sfT_1} \ket{0}_{\sfT_2}
  +\alpha_{01}\ket{0}_{\sfT_1}  \ket{1}_{\sfT_2}
  +\alpha_{10}\ket{1}_{\sfT_1}  \ket{0}_{\sfT_2}
  +\alpha_{11}\ket{1}_{\sfT_1}  \ket{1}_{\sfT_2}.
$$
This completes our protocol.

\end{document}